\documentclass[11pt]{article}
\usepackage{antonio}

\title{Proximity Gaps Conjecture Fails Near Capacity over Prime Fields}
\author{Antonio Kambiré}
\date{April 1st 2026}

\begin{document}

\maketitle

\begin{abstract}
In this report we flesh out a sketch by Krachun and Kazanin to prove that for a certain family of Reed--Solomon codes, proximity gaps fail at radii that are $O(1/\log n)$ below the capacity rate of the code, where $n$ is the length of the code. 
\end{abstract}

\section{Introduction}

The proximity gaps conjecture, introduced in \cite{BCISS20}, asserts that if many points on an affine line \(f+zg\) are each close to a Reed-Solomon code, then the line itself must be explained by a nearby codeword pair, that is, the pair \([f,g]\) is close to the corresponding interleaved Reed-Solomon code, a condition called correlated agreement. While this phenomenon is well-understood up to the Johnson bound in settings covered in \cite{BCISS20} and subsequent papers, its behavior above the Johnson bound is still less clear and is the subject of active research. 

We give a detailed and self-contained proof of a family of Reed--Solomon codes $\mathrm{RS}[\mathbb{F}_p,\mathcal D,k]$ for which proximity gaps fail at radii that are $O(1/\log n)$ below the capacity value $1-k/n$, where $n = \abs{\mathcal D}$. This construction builds on a sketch by Krachun and Kazanin~\cite{KK26}, making all bounds and parameter choices explicit. In concurrent work, Appendix A.5 of \cite{DLULIS26} used this result to formalize a conjecture on list- and curve-decodability properties of Reed-Solomon codes over prime fields, up to the information-theoretic limit.

Our main theorem is the following. We construct block lengths $n$ and dimensions $k$, a prime field $\mathbb{F}_p$, and words $f,g\in\mathbb{F}_p^{\mathcal D}$ such that for $\delta = (1-\frac{k}{n})-\Omega(1/\log n)$ there are at least $n^C$ distinct scalars $z\in\mathbb{F}_p$ with $\Delta(f+zg,\mathcal C)\le \delta$, while simultaneously $\Delta([f,g],\mathcal C^2)>\delta$, where $\mathcal C=\mathrm{RS}[\mathbb{F}_p,\mathcal D,k]$. The construction has two components. First, a coding-theoretic argument produces an explicit line with many near-codewords, following the same multiplicative-subgroup and sumset template as Theorem~7.1 of \cite{BCHKS25}. Second, a quantitative number-theoretic argument based on a quantitative Linnik theorem produces primes $p\equiv 1 \pmod n$ with $p< n^{A}$ for which the relevant subset-sum values are distinct in $\mathbb{F}_p$, ensuring polynomially many distinct parameters $z$.

\section{Statement of the Result and Proof}

We prove the following theorem which shows the failure of proximity gaps at vanishing distance near capacity:

\begin{proofbox}
    \begin{theorem}  \label{thm:main}
        For every constant $C>0$ and rate $\rho \in (0,\frac{1}{2})$, there exist infinitely many block lengths $n$, dimensions~$k$, such that with
        $\delta = (1-\tfrac{k}{n}) - \Omega(\tfrac{1}{\log n})$,
        the following hold:
        
        \begin{itemize}
            \item There exists a prime $p <n^{A}$ with $p\equiv 1 \pmod n$, for some constant $A=A(\rho,C)$.
            \item Let $\omega$ be a primitive $n^{th}$ root of unity in $\mathbb{F}_p^\times$, set $\mathcal D=\langle \omega\rangle$, and $\mathcal C=\mathrm{RS}[\mathbb{F}_p,\mathcal D,k]$. \\
            Then there exist $f,g\in \mathbb{F}_p^{\mathcal D}$ such that
            \[
            \Bigl|\bigl\{ z\in\mathbb{F}_p : \Delta(f+zg,\ \mathcal C)\le \delta \bigr\}\Bigr|
            \geq n^C,
            \qquad
            \Delta([f,g],\mathcal C^2)>\delta.
            \]
        \end{itemize}
    \end{theorem}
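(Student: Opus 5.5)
The plan is to adapt the multiplicative-subgroup construction of Theorem~7.1 of \cite{BCHKS25} with a \emph{logarithmic} number of cosets, and then to use a quantitative Linnik theorem to choose $p$ so that the resulting parameters $z$ are distinct.

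\textbf{Construction.} Fix a prime $s$ and write $n=sm$. Let $K=\langle\omega^{s}\rangle\le\mathcal D$ be the subgroup of order $m$. Every coset $aK$ has vanishing polynomial $X^m-b$ with $b=a^m\in\mu_s$, where $\mu_s=\langle\omega^m\rangle$ is the group of $s$-th roots of unity in $\mathbb F_p$. Set $Y=X^m$. For an $r$-subset $S\subseteq\mu_s$, put $T_S=\{x\in\mathcal D: x^m\in S\}$, so $|T_S|=rm$. On $T_S$ we have
\[
0=\prod_{b\in S}(Y-b)=Y^r-e_1(S)Y^{r-1}+R_S(Y),\qquad \deg_X R_S(Y)\le m(r-2).
\]
Take $f=X^{mr}$, $g=X^{m(r-1)}$ and $k=m(r-2)+1$. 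Then for $z=-e_1(S)$ the word $f+zg$ agrees on $T_S$ with $-R_S(X^m)\in\mathcal C$. Hence $\Delta(f+zg,\mathcal C)\le 1-r/s=:\delta$.

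Since $k/n=(r-2)/s+1/n$, we get $\delta=(1-k/n)-2/s+1/n$. Choosing $s=\Theta(\log n)$ gives the required $\Omega(1/\log n)$ gap, and choosing $r\approx\rho s$ makes the rate tend to $\rho$.

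\textbf{Failure of correlated agreement.} Suppose $[f,g]$ agrees with some pair in $\mathcal C^2$ on a common set $T$ with $|T|\ge(1-\delta)n=rm$. Then $X^{m(r-1)}-Q$ vanishes on $T$ for some $\deg Q<k$. Because $k\le m(r-1)$, this polynomial is nonzero of degree exactly $m(r-1)<rm$, so it has fewer than $rm$ roots, a contradiction. Hence $\Delta([f,g],\mathcal C^2)>\delta$.

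\textbf{Counting distinct $z$.} It remains to show that the values $e_1(S)=\sum_{b\in S}b$ are pairwise distinct in $\mathbb F_p$ for at least $n^C$ subsets $S$. There are $\binom{s}{r}=2^{(H(\rho)+o(1))s}$ subsets, so taking $s=c\log n$ with $c=c(\rho,C)$ large enough gives $\binom sr\ge n^{C}$.

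First work in characteristic zero. In $\mathbb Z[\zeta_s]$ with $s$ prime, the only $\mathbb Q$-linear relation among $1,\zeta,\dots,\zeta^{s-1}$ is $\sum_i\zeta^i=0$. Consequently distinct $r$-subsets with $0<r<s$ have distinct sums, and every difference $\alpha_{S,S'}=\sum_{S}\zeta^i-\sum_{S'}\zeta^i$ is nonzero. Each such difference satisfies $|N(\alpha_{S,S'})|\le(2r)^{s-1}$.

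Now pass to $\mathbb F_p$. Since $p\equiv1\pmod n$, the prime $p$ splits completely in $\mathbb Q(\zeta_s)$, and we may choose a prime $\mathfrak p\mid p$ so that $\zeta_s\mapsto\omega^m$ under reduction. Two sums coincide mod $\mathfrak p$ only if $p\mid N(\alpha_{S,S'})$.

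\textbf{Main obstacle: choosing $p$.} We must find a prime $p$ that avoids all of these norms while still satisfying $p\equiv 1\pmod n$ and $p<n^A$.
\begin{itemize}
\item The product $\prod_{S\ne S'}N(\alpha_{S,S'})$ has absolute value at most $\exp\bigl(4^s\,s\log(2s)\bigr)$.
\item With $s=c\log n$, this means it has at most $n^{O(c)}$ prime divisors exceeding $n$.
\item The quantitative Linnik theorem gives $\gg x/(\varphi(n)\log x)$ primes $p\equiv1\pmod n$ with $p\le x$, valid once $x\ge n^{L}$.
\item Taking $x=n^A$ with $A=A(\rho,C)$ large makes this count exceed $n^{O(c)}$.
\end{itemize}
So some such prime avoids every norm, and all $\binom sr\ge n^C$ values $z=-e_1(S)$ are distinct.

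Finally, to get infinitely many $n$, let $s$ range over primes, set $r=\lfloor\rho s\rfloor$, and choose $m$ so that $s\approx c\log(sm)$. I expect the bookkeeping of constants in this Linnik step, namely making $A$ depend only on $\rho$ and $C$, to be the most delicate part.
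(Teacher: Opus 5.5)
Your proposal is correct and follows essentially the paper's route: the same coset construction with $f=X^{rm}$, $g=X^{(r-1)m}$, the same degree argument against correlated agreement, and the same counting scheme. In that scheme, sums that are distinct in characteristic zero can collide only modulo primes dividing a bounded nonzero norm (the paper's resultant with $\Phi_s$), and quantitative Linnik supplies more primes $p\equiv 1 \pmod n$ below $n^A$ than there are such bad primes. The one real variation is taking $s$ prime instead of a power of $2$: all $\binom{s}{r}$ subset sums are then distinct (the paper uses only $\binom{s/2}{r}$, which is where $\rho<1/2$ enters), and dyadic $\rho$ and $n=2^t$ are no longer needed, at the cost of the rate only tending to $\rho$. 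Also, the standard unconditional Linnik bound carries an extra factor $1/\sqrt{n}$ that your bullet omits, but your argument absorbs it by enlarging $A$.
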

\end{proofbox}

\noindent\textbf{Setting parameters.} 
Before we prove the theorem, let us first explain what is the constant in the $\Omega(\tfrac{1}{\log n})$ expression in the definition of $\delta$.
To do so we choose the following parameters in $\mathbb{Z}_{>0}$:
    \begin{itemize}
        \item $C > 0$, controlling how many different sums we get
        \item $\rho := \frac{u}{2^v} \in (0,\frac{1}{2})$ for some $u,v \in \mathbb{Z}_{\geq 0}$ with $u <2^{v-1}$, chosen to be the rate of the code
        \item Let $L(\rho,C) = \max\{\frac{C}{\rho\log(1/(2\rho))}, \frac{9}{2\log 8}\}$ and pick $K=K(\rho,C)$ a power of $2$ between $L(\rho,C)$ and $2L(\rho,C)$. Such $K$ always exists -- we could pick $K=2^{\lfloor \log_2 L(\rho,C)\rfloor+1}.$
        \item $s :=2^\alpha$ for some positive integer $\alpha$ large enough to satisfy:
        \begin{itemize}
            \item $\alpha\geq v$ to guarantee that $r$ below is an integer
            \item 
            \(
                \begin{cases}
                    \alpha \geq \log_2 K \\
                    K \leq \frac{2^\alpha}{\alpha}
                \end{cases}
            \) ensuring that $K \mid 2^\alpha$ and $2^\alpha/K \geq \alpha$ so that $m$ below is a power of $2$

        \end{itemize}
        This controls how close we get to capacity
        \item $r: = \rho s+2 = u2^{\alpha-v}+2$ is a positive integer controlling the relative distance
    \end{itemize}
    
    Now, for $m: = 2^{2^{\alpha}/K -\alpha}$ a power of $2$, we choose:
    \[
        n := sm, \qquad k := (r-2)m.
    \]
    We obtain the following identities:
    \begin{align*}
    (1)\quad & \rho 
            = \frac{r-2}{s} 
            = \frac{(r-2)m}{sm} 
            = \frac{k}{n} \\
    (2)\quad & K\log_2 n 
            = K\log_2(sm) \\
            &\phantom{K\log_2 n}
            = K\log_2(2^\alpha 2^{2^{\alpha}/K  -\alpha}) \\
            &\phantom{K\log_2 n} = K 2^{\alpha}/K\\
            &\phantom{K\log_2 n}= 2^\alpha = s
    \end{align*}
    Then we set $\delta \coloneq 1-\frac{r}{s}$ which is $\eta \coloneq (1-\rho)-\delta = \frac{2}{s} \in \Omega\left(\frac{1}{\log(n)}\right)$ away from capacity.  
    
    With these parameters in place we are ready to prove Theorem~\ref{thm:main}. 
    \medskip

\begin{proof}[Proof of Theorem~\ref{thm:main}]  
    The first part of this proof is a coding theory argument on the construction of the counterexample and the second part is a more combinatorial and number theoretic one which allows us to prove the quantitative and existential aspects of the theorem.

\noindent \textbf{Constructing the Counterexample.} We work with the Reed-Solomon code $\mathcal{C} = \mathrm{RS}[\mathbb{F}_p,\mathcal{D},k]$.
Let $\xi$ be a primitive $s^{th}$ root of unity in $\mathbb{F}_p^\times$ and denote $H:=\langle \xi \rangle \subset \mathcal{D}$.
We define the set of the different sums that we can get by adding exactly $r$ elements from $H$ as follows:
\[
H^{(+r)} = \left\{ \sum_{i=1}^{r} e_i \;\middle|\; 
e_1,\dots,e_r \in H \text{ are distinct} \right\}.
\]

Now, for $f:= X^{rm}$ and $g = X^{(r-1)m}$, consider the following line in $\mathbb{F}_p^D$:
\[
L := \{f + \lambda \cdot g \mid \lambda \in \mathbb{F}_p\} \subset \mathbb{F}_p^D.
\]

Given some $\lambda =\xi_1+\xi_2+\dots+\xi_r  \in H^{(+r)}$, we claim that:
\[
\Delta(X^{rm}+\lambda\cdot X^{(r-1)m}, \mathcal{C}) \leq \delta.
\]

To show this, first notice that the number of cosets of $H$ in $D$ is $\frac{|D|}{|H|} = \frac{n}{s}=m$. Then, we can pick $r$ of these cosets:
\[
\text{for }j = 1,2,\dots,r,\qquad H_{j} := \{a \in D \mid a^m = \xi_j\}.
\]

We get the given identity by analyzing the roots of $X^m-\xi_j$:
\begin{align*}
    \prod_{a \in H_1 \cup \cdots \cup H_r} (X - a)
&=
\prod_{j=1}^r \left(\prod_{a :\, a^m = \xi_j} (X - a)\right)
=
\prod_{j=1}^r (X^m - \xi_j) \\
&=
X^{rm} - (\xi_1 + \cdots + \xi_r) X^{(r-1)m} + R(X) \\
&= X^{rm} - \lambda X^{(r-1)m} + R(x)
\end{align*}
where $R(x)$ is some polynomial of degree at most $(r - 2)m)$.

From this identity, we can easily see that the expression $X^{rm} - \lambda X^{(r-1)m} + R(x)$ vanishes whenever $a \in H_1 \cup \cdots \cup H_r$. This means that $X^{rm} - \lambda X^{(r-1)m}$ agrees with some polynomial of degree at most $(r - 2)m$, namely $R(X)$, on $H_1 \cup \cdots \cup H_r$, which is a set of size $rm = (1 - \delta)n$. Therefore, the Hamming distance to the code must be less than or equal to $\delta$.

On the other hand, correlated agreement does not occur at this Hamming distance. For the sake of contradiction, assume it did. Then, we would have some $D' \subset D$ of size $(1 - \delta)n$ for which every point on $L$ agrees with some polynomial of degree at most $k$. In particular, $X^{(r-1)m}$ agrees with a polynomial $q(X) \in \mathbb{F}_p[X]$ of degree at most $k$ on $D'$, but this implies that $|D'| \leq k = (r-2)m$ as $q(X)$ can have at most $k$ roots in $\mathbb{F}_p$. This contradicts $|D'|=rm $.

\noindent\textbf{Counting the Number of Sums.} Now we prove that $|H^{(+r)}|$ is big enough to satisfy the quantitative requirements of the theorem. For this, we make use of the quantitative version of Linnik's Theorem which we restate for convenience:

\begin{proofbox}
If $n$ is sufficiently large and \(x \geq n^L\) with
\[
L = \max\curl{4c_2, \frac{4}{c_1}\log 8c, \frac{4}{c_3}\frac{\log 8c}{|\log c_1|}} <3
\]
where $c_1,c_2$, and $c_3$ are absolute constants, then:
\[
\displaystyle
\psi(x;n,a)
:= \sum_{\substack{k \,\leq \, x \\  k\, \equiv\, a \pmod n}} \Lambda(k)
\geq \frac{x}{\vphi(n)\sqrt{n}}.
\]
\end{proofbox}

For the following, let $\log \equiv \ln$. Using the version of Linnik's theorem stated above, we have that:
\[
\displaystyle
\theta(x;n,a)
:= \sum_{\substack{p \leq x \text{ prime} \\ p \equiv a \pmod n}} \log(p)
\geq \frac{x}{\vphi(n)\sqrt{n}}+O(\sqrt{x}).
\]

At this point, assume $4^s \geq n^3$ (equivalently $4^{(r-2)n/k}\geq n^3$ which is true if $n$ is large). In the interval $[4^s,8^s]$, let
\(
T := \#\{\,p\in[4^s,8^s]\text{ prime} : p \equiv 1 \!\!\pmod n\,\}.
\)
Then:
\begin{align*}
\theta(8^s;n,1) - \theta(4^s;n,1)
&= \sum_{\substack{4^s \leq p \leq 8^s \\ p \equiv 1 \bmod n}} \log(p)
\le T \log(8^s), \\
\implies T \;&\ge\; \frac{\theta(8^s;n,1)-\theta(4^s;n,1)}{\log(8^s)}.
\end{align*}
Using the lower bound $\theta(8^s;n,1)\geq \dfrac{8^s}{\varphi(n)\sqrt{n}}$ and the trivial upper bound
\[
\theta(4^s;n,1)
=\sum_{\substack{p\le 4^s\\ p\equiv 1\!\!\pmod n}}\log(p)
\le \left(\frac{4^s}{n}+1\right)\log(4^s),
\]
we get
\[
T \;\ge\; \frac{1}{\log(8^s)}
\left(\frac{8^s}{\varphi(n)\sqrt{n}}-\left(\frac{4^s}{n}+1\right)\log(4^s)\right).
\]
Since $4^s \geq n^3$, for $n$ sufficiently large we have
\[
\left(\frac{4^s}{n}+1\right)\log(4^s)\le \frac12\cdot \frac{8^s}{\varphi(n)\sqrt{n}},
\]
and therefore
\[
T \;\ge\; \frac{8^s}{2\,\varphi(n)\sqrt{n}\,\log(8^s)}.
\]
Since $n=2^t$, we have $\varphi(n)=n/2$, hence $\varphi(n)\sqrt{n}=n^{3/2}/2$, so it follows that:
\[
T \;\ge\; \frac{8^s}{2\,\varphi(n)\sqrt{n}\,\log(8^s)}
\;=\; \frac{8^s}{n^{3/2}\log(8^s)}.
\]

Then, using $s = K\log n$, we have:
\[
n = e^{s/K}\qquad\text{and}\qquad \log(8^s)= s\log 8.
\]
Substituting these into the bound $T \geq \dfrac{8^s}{n^{3/2}\log(8^s)}$ gives us:
\[
T \;\ge\; \frac{8^s}{e^{\frac{3s}{2K}}\,(s\log 8)}.
\]
Since $e^{\frac{3s}{2K}} = 8^{\frac{3s}{2K\log 8}}$, we obtain:
\[
T \;\ge\; \frac{1}{s\log 8}\,8^{\,s\left(1-\frac{3}{2K\log 8}\right)}.
\]

Let $P(x):=\Phi_s(x)$ and $Q(x):=x^{i_1}+\dots+x^{i_r}-(x^{j_1}+\dots+x^{j_r})$.
We are now interested in the number of ``bad'' primes in this interval: those that divide the resultant of $P$ and $Q$. We have:
\[
|Res(P,Q)|=\left|\prod_{x :P(x)=0}Q(x) \right| \leq (2r)^{s/2} \leq s^s.
\]

If this value vanishes in $\mathbb{F}_p$ for some prime $p$, then the prime allows for different sums to collide. Let $B$ denote the number of such bad primes in $[4^s,8^s]$ for a specific $r$-tuple pair. Then the fact that any prime we pick satisfies $p\geq 4^s$ and that each bad prime divides the resultant implies that:
\[
 (4^s)^B \leq s^s \implies B \leq \frac{\log(s)}{\log(4)} = \log_4(s).
\]

This means that each pair of $r$-tuples creates at most $B$ collisions (generate $B$ bad primes), giving us a total number of at most:
\[
B\binom{s}{r}^2  \leq \log_4(s)(2^s)^2 \ll T
\]
triples $(p,R_1,R_2)$ where $p$ is a bad prime and $R_1,R_2$ are all pairs of $r$-tuples. The last inequality holds for large $s$ given $K >\frac{9}{2\log 8}$. Therefore there must exist a good prime in the interval $[4^s,8^s]$ with $p < 8^s = 8^{K\log n} =n^{K\log 8}$ so we can pick $A=K\log 8$, for which we have a number of different sums given by:
\[
a := |H^{(+r)}| = \binom{s/2}{r} \geq \left(\frac{s}{2r}\right)^r.
\]

Rewriting $r=\rho s+2$, we get:
\begin{align*}
a &\geq \left(\frac{s}{2(\rho s+2)}\right)^r
= \left(\frac{1}{2(\rho+\frac{2}{s})}\right)^r \\
&\approx \left(\frac{1}{2\rho}\right)^{\rho s+2} \\
&= \left(\frac{1}{2\rho}\right)^{\rho K\log n+2} \\
&= n^{\rho K \log(1/(2\rho))} \left(\frac{1}{2\rho}\right)^2.
\end{align*}

Since $K>\frac{C}{\rho\log(1/(2\rho))}$ and $\frac{1}{2\rho}>1$ as defined above, we get $a > n^C$ as needed.
\end{proof}

\bibliographystyle{plain}
\bibliography{bib}

\end{document}